\DeclareMathOperator{\post}{post}
\DeclareMathOperator{\pre}{pre}
\DeclareMathOperator{\ancestors}{path}
\DeclareMathOperator{\weight}{weight}
\DeclareMathOperator{\depth}{depth}
\begin{document}

\title{(Really) Tight bounds for dispatching binary methods}
\titlerunning{Dispatching binary methods}

\author{Pawe\l{} Gawrychowski\thanks{Supported by MNiSW grant number N~N206 492638, 2010--2012 and START scholarship from FNP.}}
\institute{
     Institute of Computer Science, University of Wroc{\l}aw, Poland\\
     Max-Planck-Institute f\"{u}r Informatik, Saarbr\"ucken, Germany\\
	\email{gawry@cs.uni.wroc.pl}
	}

\maketitle
\begin{abstract}
We consider binary dispatching problem originating from object oriented programming. We want to preprocess a hierarchy of classes and collection of methods so that given a function call in the run-time we are able to retrieve the most specialized implementation which can be invoked with the actual types of the arguments. This problem has been thoroughly studied for the case of mono dispatching~\cite{Muthukrishnan,FerraginaDynamic}, where the methods take just one argument, resulting in (expected) $\mathcal{O}(\log\log m)$ query time after just linear preprocessing. For the binary dispatching, where the methods take exactly two arguments, logarithmic query time is possible~\cite{Ferragina}, even if the structure is allowed to take linear space~\cite{Alstrup}. Unfortunately, constructing such structure requires as much as (expected) $\Theta(m(\log\log m)^2)$ time~\cite{Alstrup,Poon}.

Using a different idea we are able to construct in (deterministic) linear time and space a structure allowing dispatching binary methods in the same logarithmic time. Then we show how to improve the query time to just $\mathcal{O}(\frac{\log m}{\log\log m})$, which is easily seen to be optimal as a consequence of some already known lower bounds if we want to keep the size of the resulting structure close to linear.

\textbf{Key-words}: method dispatching, persistent data structures, rectangle geometry
\end{abstract}

\section{Introduction}

The motivation for the method dispatching comes from object-oriented programming languages, where we have a hierarchy of classes with uniquely defined parents. We also have a collection of $m$ functions accepting a constant number of arguments, where each argument must have a specified class as an ancestor in the hierarchy. Then given a function call, we should (efficiently) determine the most specific implementation based on the actual types of the arguments. This problem was first considered in the mono dispatching version, where each method takes just one argument. It is known that in such case the input can be preprocessed in linear time and space so that each query can be answered in $\mathcal{O}(\log\log m)$ time~\cite{Muthukrishnan}, and that we can update the structure in $\mathcal{O}(m^\epsilon)$ time while retaining the same bounds on the query time~\cite{Ferragina}.

The more general multi-method version of the problem was considered by Ferragina \emph{et al.}~\cite{Ferragina}, whose methods imply that for the special case of the binary dispatching (where all function are binary) we can achieve $\mathcal{O}(\log\log m)$ query time after a $\mathcal{O}(m^{1+\epsilon})$ preprocessing or $\mathcal{O}(\log m)$ query after a $\mathcal{O}(m\log m)$ preprocessing. Then Eppstein and Muthukrishnan~\cite{Eppstein} improved the former by showing that, for example, $\mathcal{O}(1)$ query time is possible after a $\mathcal{O}(m^{1+\epsilon})$ preprocessing. Finally Alstrup \emph{et al.}~\cite{Alstrup} decreased the preprocessing space to $\mathcal{O}(m)$ while retaining logarithmic query time. Unfortunately, their preprocessing time was not linear but (expected) $\mathcal{O}(m(\log\log m)^{2})$. Another structure with the same bounds was given by Poon and Kwok~\cite{Poon}. 

In this paper we first give in Section~\ref{section:algorithm} a simpler yet more effective solution with (deterministic) linear time and space preprocessing and the same logarithmic query time. Our solution uses a slightly different approach than the previous methods, and because of this difference we are then able to decrease the query time in Section~\ref{section:algorithm2} to just $\mathcal{O}(\frac{\log m}{\log\log m})$ while retaining the linear time and space preprocessing. This complexity is easily seen to be optimal for structures of size $\mathcal{O}(m\log^{c}m)$ as a consequence of some already known lower bounds, which we briefly review in Section~\ref{section:lower bound}.

While even the first logarithmic query time solution needs the word RAM model, the same is true for the previously known linear space solutions, hence it should not be seen as a drawback.

\begin{figure}[t]
\centering
\includegraphics{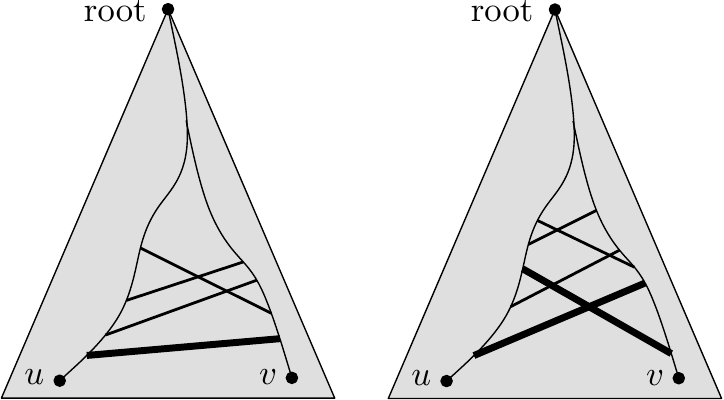}
\caption{Unique lowest bridge for $u$ and $v$ and an ambiguous situation.}
\label{figure:bridges}
\end{figure}

\section{Preliminaries}

We work with the following formulation of the problem: we are given a tree $T$ on $n$ vertices and a collection of $m$ \emph{bridges}, which are simply pairs of vertices $(u,v)$. We say that a bridge $(u,v)$ is lower than $(u',v')$ if $u'$ is a descendant of $u$ and $v'$ is a descendant of $v$. We want to preprocess the input so that given two vertices $u'$ and $v'$ we can detect the lowest bridge $(u,v)$ such that $u'$ is a descendant of $u$ and $v'$ is a descendant of $v$, see Figure~\ref{figure:bridges}. If there is no such unique lowest bridge, we need to signal ambiguity. We aim to develop a $\mathcal{O}(n+m)$ time preprocessing which allows $\mathcal{O}(\log m)$ time queries.

We work in the standard RAM model of computation with logarithmic word size. The following result is known in such model.

\begin{lemma}[atomic heaps~\cite{FredmanWillard}]\label{lemma:atomic}
It is possible to maintain a collection of sets $S(i)\subseteq\{1,2,\ldots,n\}$ so that inserting, removing and finding successor in each of those sets work in constant time (amortized for insert and remove, worst case for find) as long as $|S(i)|\leq\log^{c} n$ for all $i$, assuming $\mathcal{O}(n)$ time and space preprocessing, where $c$ is any (but fixed) constant.
\end{lemma}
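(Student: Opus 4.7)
The plan is to follow the standard Fredman--Willard atomic heap construction, which achieves this by tabulating primitive operations on very small sets that fit in a machine word, and then bootstrapping the result to polylogarithmic set sizes by a hierarchical decomposition.

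First I would handle the base case of tiny sets of size at most $k=\lfloor\tfrac{1}{5}\log n\rfloor$. Each element of $\{1,\ldots,n\}$ uses $\lceil\log n\rceil$ bits, so a sorted sequence of up to $k$ elements, together with an additional query key, fits in a constant number of machine words. I would precompute a global lookup table mapping each such encoded pair (set, query) to its answer (the successor, or auxiliary bookkeeping needed for inserts and removes, which are just rank-and-shift operations in the packed representation). The total size of the table is at most $2^{O(k\log n)}=2^{O(\log^{6/5} n)}\cdot\mathrm{poly}(\log n)$, which unfortunately is \emph{not} linear; hence the actual choice has to be $k=\Theta(\sqrt{\log n})$ or thereabouts, so that the encoding length is $o(\log n)$, guaranteeing $2^{o(\log n)}$ different patterns and hence total table size $o(n)$. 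The table can be built in time proportional to its size using straightforward iteration.

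Next I would bootstrap. To support a set of up to $\log^c n$ elements, partition it into buckets of size $\Theta(k)$ maintained in sorted order. Each bucket is stored in packed form as above, so bucket operations are constant time by one table lookup. On top of the buckets I keep an index structure containing one representative per bucket; this is again a set of at most $\log^{c}n/k$ elements, still polylogarithmic, and I handle it recursively by the same mechanism (at most a constant number of levels, since each level reduces the size by a factor $k$ and $c$ is a fixed constant). A query reduces to a constant number of successor operations: first locate the correct bucket via the top-level structure, then locate the successor within the bucket. Insertions and deletions are handled by the same route, with occasional bucket splits or merges whose amortized cost is still constant because each split/merge affects only $O(1)$ packed words and one representative.

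The main obstacle is the careful accounting that keeps the total tabulation within $\mathcal{O}(n)$ preprocessing: one must choose the atom size $k$ just large enough to make the bootstrapping bottom out in $O(1)$ levels for any fixed $c$, yet small enough that $2^{k\lceil\log n\rceil}=o(n)$. Once the parameters are tuned, correctness of the operations is routine, since at each level the operation is a single table lookup plus a constant number of pointer updates, and the amortization for insert/remove follows from the standard analysis of rebalancing small packed buckets. A detailed treatment appears in~\cite{FredmanWillard}, and we rely on it as a black box in what follows.
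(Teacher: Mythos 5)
The paper does not prove this lemma at all: it is stated as a black-box citation of Fredman and Willard's atomic heaps, so there is no in-paper argument to compare against. Since you explicitly fall back on the citation in your last sentence, the spirit of your answer agrees with the paper, but the sketch you give in between contains a genuine error worth flagging.

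The arithmetic around the base case does not close. You encode the set as a packed sorted sequence of $k$ elements, each $\lceil\log n\rceil$ bits, so the lookup key has $\Theta(k\log n)$ bits and the table has $2^{\Theta(k\log n)}$ entries. You correctly notice this is too large with $k=\Theta(\log n)$, but then claim that $k=\Theta(\sqrt{\log n})$ makes the encoding length $o(\log n)$. It does not: $\sqrt{\log n}\cdot\log n=\log^{3/2}n$, which is $\omega(\log n)$, so the table would have $2^{\Theta(\log^{3/2}n)}$ entries, superpolynomial in $n$. In fact \emph{no} $k=\omega(1)$ makes the naive packed key small enough, which is precisely why the Q-heap construction is nontrivial. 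The missing idea is compression: Fredman and Willard do not index the table by the raw packed elements, but by a compressed signature of $O(k\log k)$ (or $O(k^2)$) bits obtained by extracting only the $O(k)$ bit positions that distinguish the $k$ keys (the branching positions of a compressed binary trie over the set), packing those extracted bits with word-level multiplication tricks; the actual keys live in a side array and a candidate returned by the lookup is verified against the query in $O(1)$ time. With that compression the key length really is $o(\log n)$ for $k$ around $\log^{1/4}n$ or $\log^{1/5}n$, and the table fits in $o(n)$ space. Your bootstrapping step (buckets of atom-sized pieces, recursing a constant number of times since $c$ is fixed, amortizing splits and merges) is fine once the base case is repaired, and mirrors the B-tree-over-Q-heaps structure in the original paper. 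So the high-level route is right, but as written the proposal omits the one idea that makes the base case feasible, and the stated parameter choice would make the preprocessing superpolynomial rather than linear.
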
 

\section{New algorithm}
\label{section:algorithm}

First observe that the above problem reduces in a natural way (by computing the pre- and post-order numbers) to retrieving the smallest rectangle containing a given point on a $n\times n$ grid (or detecting there is no such unique smallest rectangle). From now on we will work with this simple geometric formulation. Note that the $x$ and $y$ projections of any two rectangles are either disjoint or contained in each other (we call such collection of rectangles {\it valid}) and we can normalize the coordinates so that $n\leq 2m$.

We sweep the grid from left to right while maintaining a structure describing currently intersected rectangles. The structure is simply a full binary tree on $n$ leaves corresponding to different $y$ coordinates. To process an interval $[y_1,y_2]$ with
$y_1<y_2$, we locate the lowest common ancestor $v$ of the leaves corresponding to $y_1$ and $y_2$ and call it {\it responsible} for $[y_{1},y_{2}]$. Each inner vertex stores a stack containing all intervals it is currently responsible for. To insert a new interval we push it onto its responsible vertex stack. To remove an interval, locate the responsible vertex and observe that (because the collection is valid) the interval we want to remove is its top element, and we can simply pop it.

\begin{figure}[t]
\centering
\includegraphics[scale=0.9]{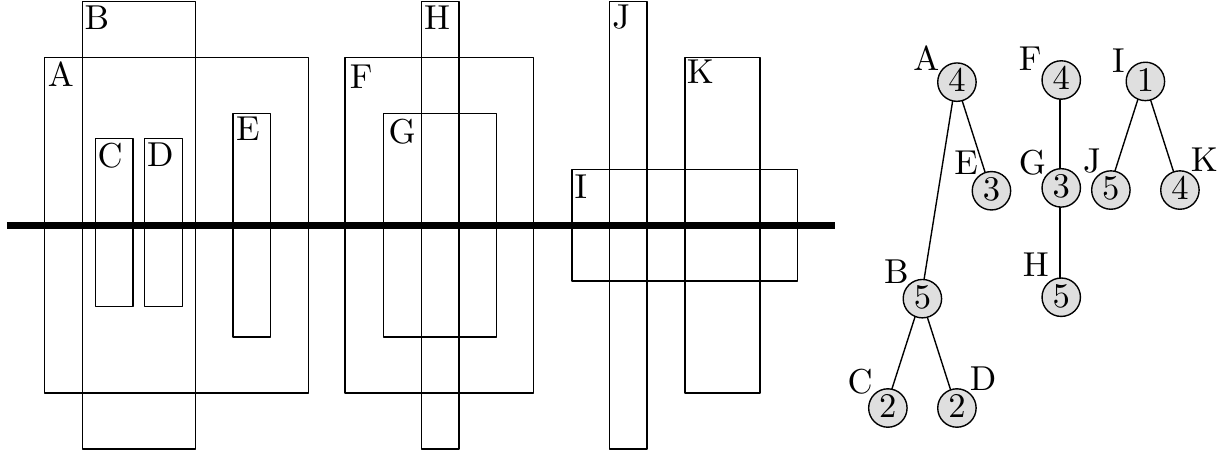}
\caption{Rectangles on the same stack and their version tree.}
\label{figure:responsible}
\end{figure}

Fix an inner vertex and consider all $\ell$ rectangles it was responsible for, see Figure~\ref{figure:responsible}. Note that the intersection of
all their $y$ projections is nonempty, and hence any two of those projections are contained in each other. By a simple linear time transformation we can assume that all their start and end points are different, and their sorted list is $x_1<x_2<\ldots<x_\ell$ (of course we cannot assume that sorting a single list can be performed in linear time, but note that we can sort lists of all inner vertices at once, and because their elements are small integers we can apply counting sort). We define the \emph{version tree} as follows: the parent of a rectangle $[x_1,x_2]\times [y_1,y_2]$ is the rectangle $[x'_1,x'_2]\times [y'_1,y'_2]$ such that $[x_1,x_2]\subseteq [x'_1,x'_2]$ and $[x'_1,x'_2]$ is the smallest possible. Because any two $x$ projections are either disjoint or contained in each other, and all $x_i$ are different, this is a valid definition. We
may assume that the result is indeed a tree (not a forest) by adding one artificial rectangle. Each vertex of this tree is labeled with an integer denoting the height $y_2-y_1$ of the corresponding rectangle. Consider the sorted list of all different $x$ coordinates. For each pair of consecutive
integers $x_i<x_{i+1}$ on this list we would like to find the vertex of the version tree such that its ancestors are exactly the elements of the stack at time $t\in(x_i,x_{i+1})$ (we call it the \emph{tail} at time $t$). This can be precomputed in a straightforward way during the sweep. 

Consider a query concerning a point $(x,y)$. First we locate the leaf $v$ corresponding to $x$. Any rectangle containing $(x,y)$ belongs to the
stack of one of its ancestors at time $x$. More specifically, it must be an ancestor of the tail at time $x$ of one of those $\log m$ stacks. Hence
we should start with locating all those $\log m$ tails efficiently.

\begin{lemma}\label{lemma:cascading}
Given a time $t$ we can retrieve its tail at every ancestor of the leaf corresponding to $v$ in total $\mathcal{O}(\log m)$ time after a linear time and space preprocessing.
\end{lemma}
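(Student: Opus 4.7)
For each inner vertex $u$ of the sweep tree, let $L(u)=(x_{1}<x_{2}<\ldots<x_{\ell})$ be the sorted list of $x$-coordinates arising from the endpoints of rectangles for which $u$ was responsible. During the sweep itself I would annotate each $x_{i}$ with the vertex of $u$'s version tree that is the tail at every time in the open interval $(x_{i},x_{i+1})$; this costs $\mathcal{O}(|L(u)|)$ per node and $\mathcal{O}(m)$ in total, by the same counting-sort argument already used to produce the $L(u)$. With this annotation, retrieving the tail at $u$ for a query time $t$ reduces to one predecessor query of $t$ in $L(u)$, and the lemma boils down to the following: given a root-to-leaf path in the sweep tree of length $d=\mathcal{O}(\log m)$ and a key $t$, return the predecessor of $t$ in $L(u)$ for every $u$ on the path in $\mathcal{O}(\log m)$ time in total.

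The sweep tree is binary, so of bounded degree, and $\sum_{u}|L(u)|=2m$ since each rectangle contributes exactly its two endpoint $x$-coordinates at a single node. These are precisely the hypotheses of the Chazelle--Guibas fractional cascading machinery, which I would invoke directly: augment each $L(u)$ by merging in a constant fraction of the augmented list of each of its children, and install cross-pointers between the neighbouring augmented lists. A standard analysis shows the augmented lists still sum to $\mathcal{O}(m)$, and since all coordinates are small integers (so counting sort suffices) the augmentation is built in $\mathcal{O}(m)$ time within the RAM model. A query first locates the predecessor of $t$ in the augmented list at the root by one binary search in $\mathcal{O}(\log m)$ time; descending to each child then updates that predecessor in $\mathcal{O}(1)$ using a cross-pointer together with a constant number of local shifts, and the predecessor in the unaugmented $L(u)$ (hence the pre-computed tail) is recovered in $\mathcal{O}(1)$ per level. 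Summing over the $\mathcal{O}(\log m)$ levels of the path gives the claimed bound.

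The only thing that needs checking is that the sweep-tree setting really fits fractional cascading: bounded degree, linear total catalog size, and a RAM-compatible construction; all three conditions are immediate. The main obstacle, to the extent there is one, is the bookkeeping for the $\mathcal{O}(1)$ descent step --- propagating the current predecessor through the cross-pointer and then correcting it by a constant number of shifts to handle the case where the true predecessor lies among the non-sampled elements --- and confirming that the initial annotation of $L(u)$ with version-tree pointers can be produced in a single pass of the sweep, both of which are standard.
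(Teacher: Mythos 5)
Your proposal is correct and takes essentially the same route as the paper: precompute, per inner vertex $u$, the tail pointer for every inter-event interval of $L(u)$ during the sweep, and then apply Chazelle's fractional cascading to the (constant-degree, linear total size) sweep tree to do all $\mathcal{O}(\log m)$ predecessor searches along the root-to-leaf path in $\mathcal{O}(\log m)$ total time. The paper simply invokes fractional cascading as a black box, whereas you spell out the preconditions and the descent step, but the argument is the same.
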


\begin{proof}
A straightforward application of the fractional cascading technique of Chazelle~\cite{Chazelle}. Recall that this technique allows linear time and
space preprocessing of a constant-degree graph with (sorted) lists of elements associated to the vertices so that given a path we can perform binary search for the same value in all lists corresponding to its vertices in time $\mathcal{O}(\log m + p)$, where $m$ is the total size of all lists and $p$ is the length of the path. In our case $p=\log m$ and the claimed running time follows.
\qed
\end{proof}

Each version tree will be carefully preprocessed as to implement two operations. Let $\ancestors(v)$ be the set of all ancestor weights of a given vertex $v$. The first operation is very simple: given $v$ we would like to find the vertex corresponding to $\max\ancestors(v)$. This can be trivially preprocessed in linear time and space. The second operation is more involved: given $v$ and $x$ we would like to find the vertex corresponding to the successor of $x$ in $\ancestors(v)$. Before we show how to implement it efficiently, we formulate two auxiliary lemmas.

\begin{figure}
\centering
\includegraphics[scale=0.9]{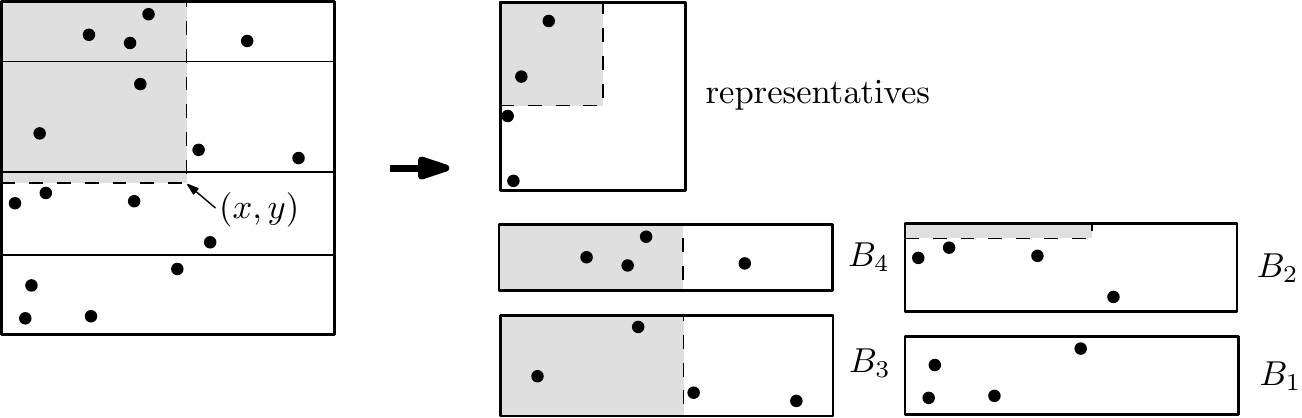}
\caption{Reducing the original query to queries in smaller structures.}
\label{figure:slices}
\end{figure}

\begin{lemma}\label{lemma:small grid}
A collection of sets of points $S(i)$ on a $n\times n$ grid such that $|S(i)|\leq\log^{2} n$ for all $i$ can be preprocessed in $\mathcal{O}(n+\sum_{i}|S_{i}|)$ time so that given $i$ and $(x,y)$ we can retrieve the point corresponding to $\min\{y' : (x',y')\in S(i)\wedge x'\leq x\wedge y'\geq y \}$ in constant time.
\end{lemma}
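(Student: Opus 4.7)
The plan is to rank-compress each $S(i)$ into a small permutation and then attack the resulting abstract query by a four-Russians style tabulation.

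First, for the rank reduction, globally radix-sort all points by $x$-coordinate and by $y$-coordinate in $O(n+\sum_i|S_i|)$ time, and replace each coordinate in $S(i)$ by its rank inside that set. For every $S(i)$ I build, via Lemma~\ref{lemma:atomic}, an atomic heap on its $x$-coordinates and one on its $y$-coordinates; with $|S(i)|\le\log^2 n$ both fit the $\log^c n$ size bound (take $c=2$). These heaps let me convert any query $(x,y)$ to the corresponding pair of within-set ranks $(j_0,r_0)$ in $O(1)$. Writing $f$ for the permutation that sends each $y$-rank to its $x$-rank, the original query reduces to ``find the smallest $r\ge r_0$ with $f(r)\le j_0$'', after which a per-set array maps $r$ back to the concrete point in $O(1)$.

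It remains to preprocess a permutation $f$ of $\{1,\dots,k\}$ with $k\le\log^2 n$ in $O(k)$ time so that such prefix-successor queries take $O(1)$. Choose a block size $g=\Theta(\log n/\log\log n)$ and cut $\{1,\dots,k\}$ into consecutive blocks of size $g$. A within-block query is fully determined by three pieces of data: the block's order type (one of $g!=n^{O(1)}$ possibilities), the starting position $r_0$ inside the block, and the rank of $j_0$ among the block's $f$-values (produced in $O(1)$ by a block-local atomic heap). A single global $o(n)$-time phase tabulates the answer for every combination. For queries whose answer lies beyond the starting block, summarize each block by the minimum of its $f$-values and recurse on the resulting length-$O(\log n\log\log n)$ sequence; one further reduction leaves a length-$O((\log\log n)^2)$ top level whose whole state packs into $O(1)$ machine words and is resolved either by direct bit-parallel manipulation or by a precomputed polylog$(n)$-size global table.

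A complete query thus performs $O(1)$ rank conversions, one within-block lookup, at most a constant number of top-level lookups to locate the correct later block if the in-block search fails, a final within-block lookup there, and the inverse rank translation — all $O(1)$. The delicate part is accounting: $g$ must be tuned so that the combined cost of the tabulations at both levels stays $o(n)$; each atomic heap used at any level must respect Lemma~\ref{lemma:atomic}'s $\log^c n$ size bound; and the translation between the ``global'' $j_0$ and its block-local counterpart must stay genuinely $O(1)$, which is exactly what the block-local atomic heaps guarantee.
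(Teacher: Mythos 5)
Your proposal is essentially correct, but it takes a genuinely different route from the paper. The paper keeps the query in two dimensions and runs a van-Emde-Boas-style $\sqrt{\cdot}$-recursion: sort the $\le\log^2 n$ points by $y$, cut into $\sqrt{m}$ blocks, build one smaller structure per block plus one on the ``representatives'' $(x'_i,i)$ recording the minimum $x$ in block $i$, and stop when the set fits into one machine word (size $\le\sqrt{\log n}$); a query probes at most three sub-structures per level, and because the sizes shrink as $m^{1/2^k}$ while the cutoff is $\sqrt{\log n}$, the recursion depth (hence query time) is $O(1)$. You instead first collapse the problem to one dimension --- rank-compress to a permutation $f$ and rephrase the query as ``smallest $r\ge r_0$ with $f(r)\le j_0$'' --- and then use a four-Russians tabulation with a \emph{fixed} block size $g=\Theta(\log n/\log\log n)$, summarizing blocks by their minima and going through the three levels $\log^2 n\to\log n\log\log n\to(\log\log n)^2$, the last of which is word-packed. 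The reductions are dual in spirit (the paper's ``representative $=(\min x,\,\text{block index})$'' plays the role of your block-minimum summary), but the mechanism differs: the paper recurses on the same 2D structure, while you share a single global order-type/rank table across all sets and levels. Your route has the small expositional advantage that the 1D permutation view makes the correctness of the block decomposition immediate, at the cost of having to tune $g$ so that $g!\cdot\mathrm{poly}(g)=o(n)$ and having to juggle block-local rank conversions; the paper's route generalizes more mechanically to the 3D variant it needs later (Lemma~\ref{lemma:fast small grid}). Two minor slips worth noting: the top level occupies $\Theta((\log\log n)^3)$ bits, so the fallback lookup table would have $2^{\Theta((\log\log n)^3)}$ entries, which is $n^{o(1)}$ but not $\mathrm{polylog}(n)$ as you wrote; and since the same universal block table serves every level, there is really only one tabulation phase to budget, not ``both levels.''
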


\begin{proof}
The idea is to recursively build a collection of smaller structures allowing performing the same operation but on subsets of the original set of point. 

Assume we have a set of $m \leq \log^2 n$ points $(x_i,y_i)$. By Lemma~\ref{lemma:atomic} we can sort the points according to their $y$ coordinates in $\mathcal{O}(m)$ time. Then we split the original set into blocks of size roughly $\sqrt{m}$ by choosing $\sqrt{m}$ evenly spaced elements $(x_{i\sqrt{m}},y_{i\sqrt{m}})$ in this sorted sequence and call the $i$-th block $B_i$. We build a smaller structure for each $B_i$. Additionally, let $x'_i$ be the smallest $x$ coordinate in the $i$-th block and create a new set of $\sqrt{m}$ points of the form $(x'_i,i)$ which we call \emph{representatives}. We build a smaller structure for this new set. If $m$ is very small, say $m\leq\sqrt{\log n}$, we switch to a different method: we can first normalize the coordinates of the points so that they are at most $\sqrt{\log m}$ and encode the whole set in a single machine word going row-by-row.

Observe that the total size of all structures built for a single $S(i)$ is just $\mathcal{O}(S(i))$. Furthermore, they allow us to answer a single query in constant time as follows. First locate the block $y$ belongs to and query the corresponding smaller structure. If this smaller structure contains a point $(x',y')$ with $x'\leq x$ and $y'\geq y$, we are done. Otherwise we use the smaller structure built for the representatives to locate the lowermost block containing such point and query its corresponding smaller structure, see Figure~\ref{figure:slices}. If the size of $S(i)$ is small and we have the whole set encoded in a single machine word, we can answer a query by first masking out all bits corresponding to points with too big $x$ or too small $y$ coordinates and then finding the lowest bit set to $1$. The total running time is constant because we will inspect just a constant number of structures for a single query.
\qed
\end{proof}

The next lemma will be used to preprocess each version tree. The main tool in its proof is the heavy path decomposition, which is defined as follows: each vertex chooses an edge leading to a child with the largest size. Removing all non-chosen edges leaves us with a collection of paths. We define the \emph{path tree} by creating one vertex for each such path, and choosing the parent of a path $p$ by looking up the parent of its highest vertex in the original tree and retrieving the corresponding path. It is easy to see that the depth of a path tree is just $\log n$. We say that a path $p$ is above a vertex $v$ if the path $v$ belongs to is a descendant of $p$ in the path tree. 

\begin{lemma}\label{lemma:version preprocessing}
A node weighted weighted tree on $n$ vertices with the weights from $\{1,2,\ldots,n\}$ can be preprocessed in linear time and space so that given $v$ and $x$ we can find the vertex corresponding to the successor of $x$ in $\ancestors(v)$ in $\mathcal{O}(\log n)$ time.
\end{lemma}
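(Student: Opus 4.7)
The plan is to apply heavy-path decomposition. Since the path tree has depth $O(\log n)$, the ancestors of any query vertex $v$ lie on $O(\log n)$ distinct heavy paths and form a top-to-bottom prefix on each, so the successor of $x$ in $\ancestors(v)$ is the minimum, over these $O(\log n)$ heavy paths, of per-path prefix-successor answers: given a heavy path $p$ with top-to-bottom weights $w_{1},\dots,w_{k}$ and the position $i(p)$ of the lowest ancestor of $v$ lying on $p$, return the smallest $w_{j}\geq x$ with $j\leq i(p)$.

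To answer each prefix-successor query in $O(1)$, I chop every heavy path into blocks of $\lceil\log^{2} n\rceil$ consecutive vertices. Inside a block, I list the pairs $(j,w_{j})$ as points and invoke Lemma~\ref{lemma:small grid} to obtain $O(1)$-time queries of the form ``smallest $y'\geq x$ with $x'\leq i$''; this handles in $O(1)$ the sub-query restricted to the block that contains $i(p)$. For the contribution of all strictly earlier blocks on the same heavy path I build a balanced binary tree whose leaves are those blocks and whose internal nodes carry the sorted list of weights of their subtree; a root-to-leaf walk visits $O(\log n)$ such sorted lists, each of which must be searched for the same key $x$.

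All these per-subtree sorted lists, across all heavy paths and chained by the path tree (after a standard binarisation to enforce constant degree), form a catalog graph of total size $O(n)$, so Chazelle's fractional cascading~\cite{Chazelle} preprocesses them in linear time and answers a concatenated search for $x$ of length $O(\log n)$ in $O(\log n)$ total time. Combined with the per-path $O(1)$ boundary-block lookup obtained from Lemma~\ref{lemma:small grid}, this yields the claimed $O(\log n)$ query time. The remaining preprocessing is linear: sorting all weights (whose universe is $\{1,\dots,n\}$) is a single counting sort, the per-block small-grid structures take time proportional to the number of points, and the fractional-cascading bridges are linear by Chazelle's theorem.

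The main obstacle is that each heavy path contributes its own prefix constraint, and fractional cascading does not directly support per-catalog prefix restrictions. I bypass this by confining the prefix-dependent sub-query to the single block straddling the boundary on each path, where Lemma~\ref{lemma:small grid} handles it in $O(1)$, while every other block can be searched in full by the value $x$ alone, which fits the standard fractional cascading pattern.
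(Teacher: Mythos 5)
There is a genuine gap in the time analysis, and a secondary one in the space bound. You chop each heavy path into blocks, build a balanced tree over the blocks, and want to search this tree for a prefix-successor on each of the up to $\log n$ heavy paths above $v$. The root-to-boundary walk inside a single block tree already costs $\mathcal{O}(\log n)$ catalog accesses in the worst case (e.g.\ one heavy path of length $\Theta(n)$), and you must perform such a walk on \emph{each} of the $\Theta(\log n)$ heavy paths met on the way to the root (their prefix boundaries are different, so you cannot reuse a single walk). The concatenated search path in your catalog graph therefore has length $\Theta(\log^{2} n)$, not $\mathcal{O}(\log n)$: a root-to-leaf path can pass through $\Theta(\log n)$ heavy paths whose lengths shrink only geometrically, giving $\sum_i \log(\ell_i/\log^{2}n) = \Theta(\log^{2}n)$. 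Fractional cascading gives $\mathcal{O}(\log m + p)$ for a path of length $p$, so you land at $\Theta(\log^{2}n)$ query time. Separately, if each internal node of a block tree stores the sorted list of \emph{all} weights in its subtree, the per-path catalog size is $\Theta(\ell_p \log(\ell_p/\log^2 n))$, so a single long heavy path already forces total catalog size $\Theta(n\log n)$, breaking the linear-preprocessing claim.

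The paper's proof avoids both problems by not building a separate search tree per heavy path. It selects a sparse set of $\mathcal{O}(n/\log^{2}n)$ \emph{important} vertices (evenly spaced in the sorted weight order of each path), plus \emph{representatives} (highest vertex per group), and builds a \emph{persistent} balanced search tree over the \emph{path tree}: the version at path $p$ holds the important vertices of all paths above $p$, sorted by weight. Because only $\mathcal{O}(n/\log^{2}n)$ elements are ever inserted, persistence creates only $\mathcal{O}(n/\log n)$ new nodes, each carrying an $\mathcal{O}(\log n)$-word helper structure (an array indexed by path depth, with a bitmask of defined entries), so the whole thing is linear-size. A single $\mathcal{O}(\log n)$-time traversal of that one BST, merging the helper bitmasks of right siblings bottom-up, recovers the per-path successor among important vertices \emph{for every path above $v$ at once}; a second, symmetric structure over representatives, together with the per-group small-grid structures from Lemma~\ref{lemma:small grid}, then resolves the ancestor (prefix) constraint in $\mathcal{O}(1)$ per path. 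The key idea you are missing is exactly this batching: replacing ``one search per heavy path'' by ``one search that simultaneously reports a per-path answer'' via the helper-bitmask trick.
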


\begin{proof}
Consider a single path. By preprocessing all paths at once we can construct a sorted list of all weights $\weight(v_{1})\leq\weight(v_{2})\leq\ldots\leq\weight(v_{\ell})$ on this path. We split each list by choosing $\frac{\ell}{\log^{2} n}$ evenly spaced weights $\weight(v_{\alpha\log^{2} n})$ and call the corresponding vertices \emph{important}. The $\alpha$-th group contains vertices $v_{\alpha \log^{2}n-\Delta}$ for $\Delta=0,1,\ldots,\log^{2}n-1$. We choose the highest vertex from each such group and calling it the {\it representative}. Note that the total number of both important vertices and representatives is at most $\frac{n}{\log^{2} n}$.  For each group we construct the corresponding small set:
$$\left(\depth(v_{\alpha\log^{2}n-\Delta}),\weight(v_{\alpha\log^{2}n-\Delta})\right) \qquad\text{ for all } \Delta=0,1,\dots,\log^{2}n-1$$
and apply the preprocessing described in Lemma~\ref{lemma:small grid}.

We would like to implement the following operations efficiently:
\begin{enumerate}
\item given $v$ and $x$, for each path above $v$ find the successor of $x$ among the weights of its important vertices,
\item given $v$ and $x$, for each path above $v$ find the successor of $x$ among the weights of all representatives which are ancestors of $v$.
\end{enumerate}
First lets see how such information allows us quick retrieval of the successor of $x$ in $\ancestors(v)$. Assuming that we know the successor of $x$ among all weights of the important vertices on $p$, we query the structure constructed for the corresponding group. If it contains at least one vertex above $v$, we are clearly done. Otherwise we know that the successor belongs to a higher group than $x$. Assuming that we know the successor of $x$ among the weights of all representatives which are above $v$, we query the structure constructed for his group.

We use almost the same to implement both operations. Lets start with the former. For each path $p$ we build a binary search tree containing all important vertices located on all paths corresponding to the ancestors of $p$ (including $p$ itself) in the path tree. The vertices are sorted according to their weight. By using any persistent balanced search trees we can build the structures in total linear time and space. Furthermore, the total number of new nodes created as a result of all inserts will be just $\mathcal{O}(\frac{n}{\log n})$. For the sake of concreteness, assume that we use trees in which the original elements are stored only in the leaves. To facilitate efficient query processing, at each vertex $v$ we store an additional \emph{helper structure} mapping a path depth to the smallest weight stored at the subtree of $v$ and originating from an important vertex with such path depth. This structure consists of an array of size $\log n$ and one word with the $i$-th bit set if and only if the $i$-th entry in the array is defined. The helper structures are of just $\mathcal{O}(\log n)$ size, hence we can afford to build one for each new node in total linear time and space. Now given a query concerning a vertex $v$, we locate its path and the corresponding binary search tree. Then we find the successor of $x$ in this tree with a single $\mathcal{O}(\log n)$ time transversal. We claim that the helper structures stored at all right brothers of the visited vertices give us enough information to locate the successors of $x$ among the chosen weights of all paths above $v$. This is fairly obvious if we consider a single such path. The tricky part is to extract the information from all of them in $\mathcal{O}(\log n)$ time. We go through the vertices in a bottom-up order. In the very beginning we do not have the successor on any path. We iteratively consider the right brother of the next visited vertex: its helper structure gives us the successor on every path for which the corresponding entry is defined and which is yet unknown. By storing the currently unknown set in a single word, we can compute this intersection in constant time, and then extract the successors in constant time per path.

To implement the latter, we use almost the same method. The only exception is that now we build a binary search tree for each representative instead of each path.
\qed
\end{proof}

Now we are ready to prove the main lemma in this section.

\begin{lemma}\label{lemma:smallest}
A set of $m$ rectangles on a $n\times n$ grid can be preprocessed in linear time and space so that given a point $(x,y)$ we can find the rectangle $[x_{1},x_{2}]\times[y_{1},y_{2}]$ containing $(x,y)$ with the smallest height $y_{2}-y_{1}$ in $\mathcal{O}(\log m)$ time.
\end{lemma}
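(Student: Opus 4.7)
My plan is to combine the sweep structure with the two operations supported on each version tree and fractional cascading.

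\textbf{Localising the candidates.} Given a query $(x,y)$, I locate the leaf $v$ of the $y$-tree corresponding to $y$. A rectangle $[x_1,x_2]\times[y_1,y_2]$ covers $(x,y)$ precisely when $x\in[x_1,x_2]$ and $y_1\le y\le y_2$; the latter forces its responsible vertex $\mathrm{LCA}(y_1,y_2)$ to be an ancestor of $v$, and the former forces the rectangle to lie on the stack of that ancestor at time $x$, i.e.\ to be an ancestor in the corresponding version tree of the tail at time $x$. By Lemma~\ref{lemma:cascading} I obtain the tails at all $\mathcal{O}(\log m)$ ancestors in total $\mathcal{O}(\log m)$ time.

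\textbf{Per-vertex reduction to a successor.} Fix an ancestor $u$ with split value $y^{*}_u$ (the last $y$-coordinate of the left subtree of $u$). All rectangles responsible for $u$ have $y$-projections straddling $y^{*}_u$ and are therefore pairwise nested. Along the resulting chain the height $y_2-y_1$, the left extent $L(R):=y^{*}_u-y_1$, and the right extent $R(R):=y_2-y^{*}_u$ are all simultaneously monotone. If $y\le y^{*}_u$ then ``$R$ covers $y$'' is equivalent to $L(R)\ge y^{*}_u-y$, so the smallest-height covering rectangle on the stack at $u$ is the one whose $L$-value is the successor of $y^{*}_u-y$ among the $L$-values on the tail's ancestor path in the version tree (and symmetrically using $R$ for $y>y^{*}_u$). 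I therefore preprocess each version tree twice by Lemma~\ref{lemma:version preprocessing}, once with weights $L$ and once with $R$, and also keep the trivial $\mathcal{O}(1)$-time maximum-weight structure.

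\textbf{Composition and main obstacle.} Invoking the successor blindly at every one of the $\mathcal{O}(\log m)$ ancestors costs $\mathcal{O}(\log m\cdot\log n)=\mathcal{O}(\log^{2} m)$; squeezing the composition down to $\mathcal{O}(\log m)$ is the main difficulty. My plan is to first sweep the $\mathcal{O}(\log m)$ ancestors and, using only the $\mathcal{O}(1)$-time maximum-weight operation on the appropriately ($L$- or $R$-)weighted version tree, decide at each whether its stack contains any covering rectangle at all (the $\max$ must clear the query-dependent threshold) and to read off the height of the \emph{largest} covering rectangle as a cheap upper bound. These $\mathcal{O}(\log m)$ upper bounds should allow enough pruning that only $\mathcal{O}(1)$ honest successor queries must actually be performed; alternatively, one can try to thread the $\mathcal{O}(\log m)$ successor queries through a single joint fractional-cascading structure built on the heavy-path decompositions of all the relevant version trees, amortising the $\log n$ factor across them. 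Ambiguity detection is folded in by tracking the two smallest heights found and declaring the answer ambiguous exactly when they coincide.
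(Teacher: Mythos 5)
Your opening two paragraphs match the paper's setup: fractional cascading to obtain the $\mathcal{O}(\log m)$ tails, and a per-ancestor reduction to a weighted-ancestor successor query. Your $L$/$R$-weight trick is a legitimate alternative to the paper's choice (the paper weights directly by height $y_2-y_1$ and first binary-searches the sorted list of $y$-intervals at that vertex to find the threshold height $h^*$ guaranteeing coverage, then does one successor query). Neither variant affects the asymptotics.

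The genuine gap is in the composition step, which you flag as ``the main difficulty'' and then leave unresolved. Because the collection is \emph{valid} (any two $y$-projections are nested or disjoint), there is a clean structural fact you are missing: the smallest-height covering rectangle must lie in the version tree of the \emph{lowest} ancestor that has any covering rectangle at all. To see this, suppose $v'$ is a strict ancestor of $v$ in the sweep tree and both stacks hold a rectangle containing $y$; a covering rectangle $R'$ responsible for $v'$ straddles $v'$'s split value, which lies outside $v$'s segment, so $R'$'s $y$-projection leaves $v$'s segment, while a covering rectangle $R$ at $v$ has its $y$-projection inside $v$'s segment. Both contain $y$, so by validity one is nested in the other, forcing $R'\supsetneq R$ and hence a strictly larger height. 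Consequently exactly \emph{one} honest $\mathcal{O}(\log m)$ successor query suffices, at the single ancestor found by scanning from the bottom with the $\mathcal{O}(1)$-time maximum check. This replaces both of your speculative plans: the pruning idea is close in spirit but, as written (``should allow enough pruning that only $\mathcal{O}(1)$ honest queries are needed''), it is a hope rather than a proof and you would still need the above claim to make it one; the ``joint fractional cascading across the version trees'' alternative does not obviously apply, since the version trees at different sweep-tree vertices are built on unrelated element sets and are not arranged along a path that fractional cascading could exploit. Also, the ambiguity check you fold in does not belong in this lemma: the paper handles ambiguity later, in the theorem, by running the smallest-height query and, after swapping coordinates, a smallest-width query, and comparing the two results.
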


\begin{proof}
First apply Lemma~\ref{lemma:cascading} to locate the tail at time $x$ in every ancestor of the leaf corresponding to $x$. Then select $v$ to be the lowest of those ancestors such that the maximum on the corresponding tail-to-root path is sufficiently big to contain $y$ in the corresponding interval. There are just $\log m$ ancestors and extracting each maximum requires constant time. Now we claim that the smallest height rectangle can be found by looking at the version tree of $v$ only.  Assume otherwise, i.e., there is $v'$ such that $v'$ is an ancestor of $v$ and the smallest rectangle belongs to the version tree of $v'$. But then the $y$ interval at $v'$ properly contains the the $y$ interval at $v$, and hence the interval of the rectangle at $v$ is smaller.

To finish the proof, we apply Lemma~\ref{lemma:version preprocessing} to each version tree. Then given the tail at $v$ we first compute the smallest possible height of a rectangle stored in this tree which guarantees containing $y$ inside. This can be performed in $\mathcal{O}(\log m)$ time if we store all $y$ intervals sorted according to their lengths. Then we compute the successor of this smallest possible height on the tail-to-root path. It corresponds to the smallest height rectangle.
\qed
\end{proof}

\begin{theorem}
Bridge color problem can be solved in $\mathcal{O}(\log m)$ time after a linear time and space preprocessing.
\end{theorem}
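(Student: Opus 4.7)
The plan is to invoke Lemma~\ref{lemma:smallest} twice on the rectangle problem induced by the pre/post-order reduction described at the start of Section~\ref{section:algorithm}. Each bridge $(u,v)$ becomes an axis-aligned rectangle on an $n\times n$ grid, and the unique lowest bridge of which $u'$ and $v'$ are descendants translates into the unique smallest rectangle containing the query point $(x,y)$, with ambiguity in the bridge problem corresponding to non-existence of such a unique smallest rectangle.

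The key structural observation is that among the rectangles that contain a fixed point $(x,y)$, the $x$-projections are totally ordered by inclusion: any two of them contain $x$, hence intersect, and by the validity of the input collection intersecting projections must be nested. The same argument applies to $y$-projections. Consequently, a unique smallest rectangle containing $(x,y)$ exists if and only if the rectangle $R_y$ minimizing the height $y_2-y_1$ coincides with the rectangle $R_x$ minimizing the width $x_2-x_1$, and in that case $R_x=R_y$ is exactly the answer.

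Lemma~\ref{lemma:smallest} delivers $R_y$ in $\mathcal{O}(\log m)$ time after linear preprocessing. Running the symmetric construction of Section~\ref{section:algorithm}, with the roles of the two axes exchanged (sweeping bottom to top instead of left to right, and treating $x$-intervals as the stack contents), yields a second structure producing $R_x$ within the same bounds. The query algorithm simply runs both structures and compares the two results: if they agree we output that rectangle, otherwise we report ambiguity. The only nontrivial step is justifying the ``unique smallest iff $R_x=R_y$'' characterization, which is forced by the validity condition; no additional preprocessing machinery beyond two symmetric copies of the Section~\ref{section:algorithm} construction is required, so the overall time and space remain $\mathcal{O}(n+m)$.
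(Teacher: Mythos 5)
Your overall structure matches the paper: reduce to rectangles, invoke Lemma~\ref{lemma:smallest} once for height and once (on a symmetric, axis-swapped structure) for width, and declare the answer to be the common rectangle if the two agree. However, there is a genuine gap: you never address the possibility that two bridges share an endpoint. When that happens, two rectangles containing the query point can have \emph{exactly the same} $y$-projection, hence the same height, so ``the rectangle $R_y$ minimizing the height'' is not well-defined, and Lemma~\ref{lemma:smallest} does not specify which of the tied rectangles it returns. In that situation $R_y$ may fail to coincide with the unique lowest bridge even though one exists (the unique lowest bridge has minimum height \emph{and}, among ties, minimum width, which the lemma does not guarantee to pick), so your ``unique smallest iff $R_x=R_y$'' test can misfire in both directions. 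Your nesting argument shows the projections are totally ordered by inclusion, but inclusion can be equality, and that is exactly the case you need to exclude.

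The paper closes this gap with an explicit normalization step before the geometric reduction: it replaces any vertex $u$ shared by bridges $(u,v_1),\dots,(u,v_k)$ (with $\depth(v_1)<\dots<\depth(v_k)$) by a path $u_1\rightarrow\dots\rightarrow u_k$ and reassigns bridge $(u_i,v_i)$, increasing the vertex count to at most $n'=n+2m$. After this transformation no two bridges share an endpoint, so for any query point the rectangles containing it have pairwise distinct heights and pairwise distinct widths, making $R_x$ and $R_y$ uniquely defined and your characterization valid. You should include this preprocessing (and argue it preserves the answer and stays linear) for the proof to be complete.
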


\begin{proof}
First we transform the input so that no two bridges share an endpoint. This can be ensured by increasing the number of vertices to at most $n'=n+2m$ by repeating the following procedure: given a group of bridges $(u,v_{1})$, $(u,v_{2})$ ,$\ldots$, $(u,v_{k})$ with the same endpoint $u$, sort them so that $\depth(v_{i})<\depth(v_{i+1})$ for all $i=1,2,\ldots,k-1$. Then replace $u$ by a path $u_{1}\rightarrow u_{2}\rightarrow\ldots\rightarrow u_{k}$ and create $k$ bridges $(u_{i},v_{i})$ for $i=1,2,\ldots,k$.  We interpret the resulting set of bridges as a collection of rectangles on a $n'\times n'$ grid.

By Lemma~\ref{lemma:smallest} we can find the rectangle $[x_{1},x_{2}]\times[y_{1},y_{2}]$ containing $(x,y)$ with the smallest height $y_{2}-y_{1}$ in $\mathcal{O}(\log m)$ time. By swapping all $x$ and $y$ coordinates, we can also find the rectangle with the smallest width $x_{2}-x_{1}$. Note that because no two bridges share an endpoint, those two rectangles are uniquely defined. If they are different, we signal ambiguity. Otherwise this rectangle is the unique lowest bridge. 
\qed
\end{proof}

\section{Decreasing the query time}
\label{section:algorithm2}

In order to reduce the query time we will carefully modify the algorithm from Section~\ref{section:algorithm}. We use a combination of standard tricks (increasing the degree of the tree to $\log^{\epsilon}n$) and a few extension of the observations from the previous section. The high level idea stays the same: we sweep the plane from left to right maintaining a structure describing currently intersected rectangles.  The structure is a full tree of degree $d=\log^{\epsilon}n$ for a sufficiently small $\epsilon$ (to be fixed later) on $n$ leaves corresponding to different $y$ coordinates . Each vertex stores three structures:

\begin{enumerate}
\item left stack storing prefixes of the corresponding segment,
\item right stack storing suffixes of the corresponding segment,
\item block stack storing fragments consisting of a number of whole segments corresponding to a contiguous range of its children.
\end{enumerate}

To process an interval $[y_{1},y_{2}]$ we locate the lowest common ancestor $u$ of the leaves corresponding to $y_{1}$ and $y_{2}$ and split the interval into three parts: a suffix of a segment corresponding to some child of $v_{i}$, a number of full segments corresponding to $v_{i+1},\ldots,v_{j-1}$ (which we call the middle part) and finally a prefix of a segment corresponding to some child $v_{j}$, where $v_{1},v_{2},\ldots,v_{d}$ are the children of $u$.
All stacks will be implemented using technique similar to the one from Lemma~\ref{lemma:version preprocessing}, with the block stack requiring one additional detail. Nevertheless, it also uses the idea of storing a list of all different $x$ coordinates where a new item appears or disappears. For each pair of consecutive $x_{i}<x_{i+1}$ we store a pointer to the corresponding version of the structure, and apply fractional cascading to quickly locate the most up-to-date version given a query $(x,y)$. Now we need a slightly stronger version, though.

\begin{lemma}\label{lemma:fast cascading}
Given a time $t$ we can retrieve the pointers to all current structures at all ancestors of the leaf corresponding to $x$ in total $\mathcal{O}(\frac{\log m}{\log\log m})$ time after a linear time and space preprocessing.
\end{lemma}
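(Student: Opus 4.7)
The plan is to extend Chazelle's fractional cascading from Lemma~\ref{lemma:cascading} using two word-RAM ideas, so as to simultaneously overcome the $\Theta(\log m)$ startup cost and the non-constant degree $d = \log^{\epsilon} n$ of our new sweep tree. Since the root-to-leaf path now has length $p = \mathcal{O}(\log_d n) = \mathcal{O}(\log m / \log \log m)$, the goal is to pay $\mathcal{O}(\log m / \log \log m)$ for the initial search at the root and $\mathcal{O}(1)$ for every downward step thereafter.

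First I would install a Fredman--Willard fusion tree on the sorted list of timestamps stored at the root. Thanks to the atomic-heap bootstrapping of Lemma~\ref{lemma:atomic}, such a fusion tree can be built in linear time and space over the $\mathcal{O}(m)$ small-integer timestamps and answers a single predecessor query in $\mathcal{O}(\log_{w} m) = \mathcal{O}(\log m / \log \log m)$ time, handling the first vertex of the queried path.

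For the downward descent I need to adapt the bridge construction, because naively placing bridges between each vertex and each of its $d$ children would cost a factor $d$ in space. Instead I would use bridges of density $1/d$ at every vertex, so that each child receives only $|L_v|/d$ entries and the total blow-up per vertex is $\mathcal{O}(|L_v|)$, keeping the space $\mathcal{O}(m)$. After each bridge hop the correct position in the child's list is known to lie within a window of $\mathcal{O}(d) = \mathcal{O}(\log^{\epsilon} n)$ consecutive items; since this is polylogarithmic, the window fits into an atomic heap from Lemma~\ref{lemma:atomic}, so the refinement takes $\mathcal{O}(1)$ time and yields the pointer to the current structure stored at the child. Accumulated over all $\mathcal{O}(\log m / \log \log m)$ ancestors this matches the claimed bound.

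The main obstacle will be guaranteeing that the entire preprocessing---the fusion tree at the root, the bridge structures at every vertex, and the atomic heaps for the windows---stays within linear total time and space. This forces $\epsilon$ to be chosen strictly below the constant $c$ permitted by Lemma~\ref{lemma:atomic}, and requires building all of the per-window atomic heaps uniformly in a single global pass so that each list entry is charged to only $\mathcal{O}(1)$ windows. Once these bookkeeping details are handled the query bound follows from the outline above.
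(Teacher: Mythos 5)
Your proposal takes a genuinely different route from the paper: the paper's proof is a one-liner that invokes Shi and J\'aJ\'a's ``fast fractional cascading'' as a black box (it already promises linear preprocessing and $\mathcal{O}(\frac{\log m}{\log\log m}+p)$ search time on paths in a $\log^\epsilon m$-ary tree), whereas you attempt to reconstruct such a structure directly from fusion trees, atomic heaps, and a hand-rolled cascading scheme. The outline is sound in spirit --- you correctly isolate the two obstacles (the $\Theta(\log m)$ start-up cost at the root, and the potential factor-$d$ blow-up in the bridges when the degree is $d=\log^\epsilon n$) and propose reasonable fixes for both --- but a few details need tightening before the argument actually closes.

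The main imprecision is the claimed $\mathcal{O}(d)$ window after a bridge hop. With bottom-up cascading (child lists sampled at rate $1/d$ into the parent, which is the direction that keeps total space $\mathcal{O}(m)$) and bridge pointers stored only at every $d$-th position of the parent's augmented list, the uncertainty window in a child's list is $\mathcal{O}(d^2)$, not $\mathcal{O}(d)$: between two consecutive bridge positions in the parent there may be up to $d$ elements copied from that child, and each pair of consecutive copies hides up to $d$ further elements of the child's list. This is still $\log^{2\epsilon}m$, so Lemma~\ref{lemma:atomic} absorbs it, but the exponent matters when you choose $\epsilon$ relative to the constant $c$ in that lemma (you need $2\epsilon\le c$, not $\epsilon<c$). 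Relatedly, your phrase ``each child receives only $|L_v|/d$ entries'' reads as top-down cascading, which does \emph{not} give linear space (the recurrence accumulates over the $\Theta(\frac{\log m}{\log\log m})$ levels); the construction must be read as bottom-up sampling with $d$ pointers per every $d$-th position, and you should say so explicitly, together with the observation that consecutive bridge windows for a fixed $(v,w)$ pair tile $A(w)$ with $\mathcal{O}(1)$ overlap so that the atomic heaps can be built in $\mathcal{O}(m)$ total. Finally, the fusion tree at the root is fine but slightly misattributed: fusion trees are not obtained by ``bootstrapping'' Lemma~\ref{lemma:atomic}, and in fact you do not even need one here --- the root list is over $x$-coordinates in $\{1,\ldots,n\}$ with $n=\mathcal{O}(m)$, so a direct $\mathcal{O}(n)$-size lookup array already gives $\mathcal{O}(1)$ start-up. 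With these corrections your from-scratch construction does establish the lemma; what it buys over the paper's citation is self-containedness, at the cost of having to re-derive (a slightly weaker form of) the Shi--J\'aJ\'a machinery.
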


\begin{proof}
We apply the {\bf fast} fractional cascading of Shi and  J{\'a}J{\'a}~\cite{Jaja}. It allows a linear time and space preprocessing of a $\log^{\epsilon}m$-degree (where $\epsilon<\frac{1}{5}$) tree with (sorted) lists of elements associated to the vertices so that given a path of length $p$ we can find the successors of a given value in all lists corresponding to its vertices in time $\mathcal{O}(\frac{\log m}{\log\log m}+p)$. As we are working with a full $d$-ary tree, $p\in\mathcal{O}(\frac{\log m}{\log\log m})$ and the claimed running time follows.
\qed
\end{proof}

The implementation of a block stack consists of two parts. For each $i\leq j$ we store a stack of intervals which the middle part corresponds to $v_{i},\ldots,v_{j}$. Note that if we store a pointer for each $i\leq j$, the space usage might be too large. Fortunately, only a linear number of those pointers will be non-null during the whole execution, hence for each vertex we can store a dictionary mapping $(i,j)$ to a pointer. The dictionary can be implemented efficiently using Lemma~\ref{lemma:atomic}. The second part of the implementation is a single word encoding information which stacks are currently nonempty. Observe that the stack of intervals stored for each $i\leq j$ is indeed a stack: we either push a new interval or pop the one which is on the top. Hence its implementation will be the same as in the case of left and right stacks. The only difference is that given $k$ we need to quickly find $i\leq j$ such that the corresponding stack is nonempty, $i\leq k\leq j$ and $j-i$ is smallest. This can be easily done in constant time using the single word encoding all nonempty stacks. To finish the implementation we need to show how to implement all stacks. As in the previous section, we will store their version trees, and a stack is actually a pointer to the current vertex in such tree. Each version tree will be preprocessed using a stronger version of Lemma~\ref{lemma:version preprocessing}. For that we first need to extend Lemma~\ref{lemma:small grid}.

\begin{lemma}\label{lemma:fast small grid}
A collection of  sets of points $S(i)$ on a $n\times n\times n$ grid such that $|S(i)|\leq\log^{3} n$ for all $i$ can be preprocessed in $\mathcal{O}(n+\sum_{i}|S_{i}|)$ time so that given $i$ and $(x,y,z)$ we can retrieve the point corresponding to $\min\{z' : (x',y',z')\in S(i)\wedge x'\leq x\wedge y'\leq y\wedge z'\geq z \}$ in constant time.
\end{lemma}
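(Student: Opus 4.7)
The plan is to generalize the nested block decomposition from Lemma~\ref{lemma:small grid} to three dimensions by adding one level of recursion on the extra coordinate. For each $S(i)$ with $m \le \log^3 n$ points, I would first sort by $z$ (the coordinate being minimized) using Lemma~\ref{lemma:atomic}, partition into $\sqrt{m}$ consecutive blocks $B_1,\dots,B_{\sqrt m}$ of $\sqrt m$ points each, and recurse with the same structure inside every block. The base case is triggered when a block has at most $\sqrt{\log n}$ points: all three coordinates renormalize to $O(\log\log n)$ bits each, so the block fits in a single machine word, and a query is answered by masking out with three precomputed masks every point violating any of the three inequalities and then extracting the survivor of smallest $z'$ via a lowest-set-bit routine. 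Since $(\log^3 n)^{1/2^k}$ drops below $\sqrt{\log n}$ after a constant number of square roots, the recursion has constant depth, and both the preprocessing time and space telescope to $O(n + \sum_i |S(i)|)$.

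A query $(x,y,z)$ on a non-base set is then resolved by taking the minimum-$z'$ among two candidates: the recursive answer inside the block $B_{j_z}$ whose $z$-range covers $z$, and the recursive answer inside $B_{j^{\star}}$, where $j^{\star}$ is the smallest index strictly greater than $j_z$ such that $B_{j^{\star}}$ contains some point with $x' \le x$ and $y' \le y$. Within such a block the inequality $z' \ge z$ is automatic, so the second recursive call only has to minimize $z'$ subject to the two remaining predicates. The resulting recurrence $T(m) \le 2 T(\sqrt m) + O(1)$ then evaluates to $O(1)$ thanks to the constant recursion depth, provided $j^{\star}$ can itself be extracted in constant time.

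Pinpointing $j^{\star}$ is exactly the main obstacle. In Lemma~\ref{lemma:small grid} a single representative per block sufficed because only one inequality had to be tested, but here both $x' \le x$ and $y' \le y$ must hold simultaneously, so each block's witness locus is a two-dimensional lower-left Pareto staircase of up to $\sqrt m$ points. I would build an auxiliary point set consisting of these staircases tagged with their block indices and apply the same three-dimensional structure to it, invoked with the query $(x, y, j_z)$ so that the answer is precisely $j^{\star}$. The delicate quantitative point I expect to spend most care on is bounding the total staircase size level by level: by charging each staircase point to a distinct original input point and exploiting the disjointness of the blocks one keeps the cumulative auxiliary size $O(m)$ at every level, so the $O(n + \sum_i |S(i)|)$ preprocessing bound and the $O(1)$ query time both survive the constant-depth recursion.
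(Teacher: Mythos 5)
Your plan follows the paper's 2D recipe closely in spirit, but there is a circularity in the step you yourself flag as delicate: locating $j^{\star}$. The auxiliary set you build — the union of Pareto staircases of the $\sqrt m$ blocks, tagged by block index — can have size $\Theta(m)$: each block has $\sqrt m$ points, and in the worst case all of them lie on the staircase, so the staircases sum to $\sqrt m \cdot \sqrt m = m$. You then "apply the same three-dimensional structure to it," which is a recursive invocation of the very lemma you are proving on an instance that has \emph{not} shrunk. The query recurrence is therefore really $T(m) \le 2T(\sqrt m) + T(|A|) + O(1)$ with $|A| = \Theta(m)$, and the preprocessing recurrence has the same defect, so neither bottoms out. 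The claim that "cumulative auxiliary size is $O(m)$ at every level" would control the cost \emph{if} the depth were constant, but constant depth is precisely what is lost once an auxiliary set of size $m$ spawns its own auxiliary set of size $m$.

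The root cause is that you partition along only one coordinate ($z$), which leaves a genuinely two-dimensional witness locus (the staircase) per block. The paper instead sorts along \emph{all three} coordinates, cutting the set into $yz$-, $xz$-, and $xy$-blocks of size roughly $m^{3/4}$ each, and forms the representative set by mapping every point to its triple of block indices in $[1,m^{1/4}]^3$. That representative set has at most $m^{3/4}$ distinct triples — strictly smaller than $m$ — so the recursion genuinely contracts and reaches the word-packing base case after constantly many levels. A query first consults the three blocks that are "aligned" with $x$, $y$, and $z$ (handling the case where the answer shares a block with the query), and otherwise uses the representative structure to learn the $z$-block of the answer, then drops into the corresponding $xy$-block. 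To repair your argument you would need a replacement for the staircase auxiliary set that is provably of size $o(m)$, and the cleanest way to get that is exactly the simultaneous three-way block decomposition.
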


\begin{proof}
We extend the method used in the proof of Lemma~\ref{lemma:small grid}. We partition $S(i)$ into $yz$, $xz$ and $xy$ blocks of size roughly $m^{3/4}$ by choosing $m^{1/4}$ evenly spaced elements in the sequence of points sorted according to the first, second, and third coordinate, respectively. For each such block we build a smaller structure. Additionally, we create a new set of {\it representatives} by taking all original points and replacing their coordinates with the number of the corresponding $yz$, $xz$, and $xy$ block. Note that the size of this new set is at most $m^{3/4}$. We build a smaller structure for this set of representatives. If $m\leq\log^{1/3}n$, we normalize the coordinates and encode the whole set in a single machine word instead.

To answer a query concerning $(x,y,z)$ we first use the smaller structures built for $yz$, $xz$, and $xy$ blocks. This gives us the answer if its coordinates are close to $x$, $y$ or $z$. Otherwise we use the smaller structure built for the representatives, which gives us the $z$ coordinate of the answer. Having this coordinate, we use the smaller structure built for the corresponding $xy$ block. Overall, the running time is constant, and the total size of all structures is $\mathcal{O}(m)$, as the depth of the recursion is constant.
\qed
\end{proof}

\begin{figure}
\centering
\includegraphics[scale=0.5]{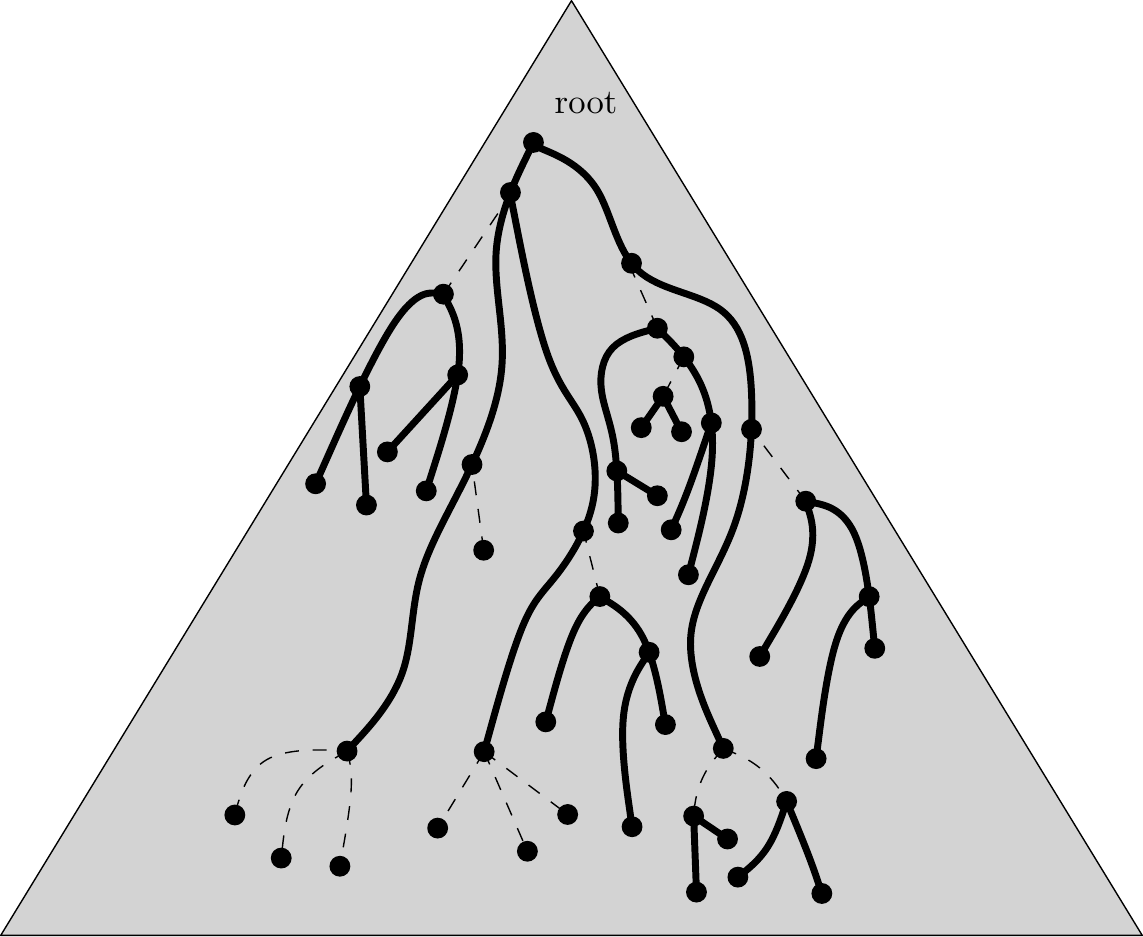}
\caption{Decomposition of a tree into thin fragments.}
\label{figure:thin fragments}
\end{figure}

To extend Lemma~\ref{lemma:version preprocessing}, we need a relaxation of the heavy path decomposition, which we call a {\it thin fragments decomposition}.  First we choose all edges connecting vertices whose subtrees are of size at least $\frac{n}{\log n}$. After removing this top fragment
of the tree we get a collection of smaller trees. For each of them we do the same, namely we choose all edges connecting vertices whose subtrees are of size which is a logarithmic fraction of the whole size of the current tree, remove the resulting top fragments, and repeat the whole procedure until we get single vertices. Each removed fragment is a tree on at most $\log n$ leaves (but possibly many more inner vertices, hence the name), see Figure~\ref{figure:thin fragments}. We define the {\it fragment tree} with vertices corresponding to fragments and edges defined in the natural way, namely we make one vertex a child of another if the root of the former fragment is a child of a vertex belonging to the latter fragment. It is easy to see that its depth is at most $\frac{\log n}{\log\log n}$. {\it Fragment depth} is the depth of the corresponding vertex in the fragment tree.

\begin{lemma}\label{lemma:fast version preprocessing}
A node weighted weighted tree on $n$ vertices with the weights from $\{1,2,\ldots,n\}$ can be preprocessed in linear time and space so that given $v$ and $x$ we can find the vertex corresponding to the successor of $x$ in $\ancestors(v)$ in $\mathcal{O}(\frac{\log n}{\log\log n})$ time.
\end{lemma}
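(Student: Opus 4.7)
The plan is to adapt the proof of Lemma~\ref{lemma:version preprocessing} to the thin fragments setting, upgrading every ingredient by one $\log\log n$ factor. After applying the thin fragments decomposition, the fragment tree has depth $\mathcal{O}(\log n/\log\log n)$, so the ancestor chain of every vertex crosses at most that many fragments. I sort all weights globally by counting sort and cut the sorted list into blocks of size $\log^{3}n$; the topmost vertex of each block becomes a \emph{representative}, giving $\mathcal{O}(n/\log^{3}n)$ representatives in total. For each block I build a small structure via Lemma~\ref{lemma:fast small grid}, whose three coordinates record the preorder interval of descendant fragment-leaves of each vertex (inverted on one coordinate to fit the orthogonal template of Lemma~\ref{lemma:fast small grid}) together with its weight. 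This lets me ask in constant time, inside a single block, for the smallest weight $\ge x$ whose owning vertex is an ancestor (inside its fragment) of a specified query vertex.

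Next, for every fragment $F$ I build a single $\log^{\epsilon}n$-ary persistent search tree keyed by weight, inheriting the tree of the parent fragment in the fragment tree and pushing into it the representatives of $F$. Its depth is $\mathcal{O}(\log n/\log\log n)$. Each tree node carries a \emph{helper} that maps every (ancestor fragment, fragment-leaf) slot to the smallest weight in its subtree tagged with that slot, together with a bitmap of defined slots so that set operations are word-parallel. Because any ancestor chain in the fragment tree passes through $\mathcal{O}(\log n/\log\log n)$ fragments each carrying at most $\log n$ leaves, the slot alphabet fits in $\mathcal{O}(\log n)$ machine words. By the same telescoping count as in the binary case, the total number of newly created persistent nodes is linear, and so is the aggregated helper storage.

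A query $(v,x)$ runs one descent in the $\log^{\epsilon}n$-ary tree of the fragment containing $v$. At every level I update a ``still unknown'' bitmap with the aggregated helpers of the $d-1$ siblings of the descended child, producing after the $\mathcal{O}(\log n/\log\log n)$ levels a candidate representative for every (ancestor fragment, fragment-leaf) slot above $v$; a single constant-time call into the Lemma~\ref{lemma:fast small grid} structure of the owning block then refines each candidate into the true successor, and the minimum over the $\mathcal{O}(\log n/\log\log n)$ slots above $v$ is the output. A symmetric construction keyed by representatives, analogous to the ``latter'' operation of Lemma~\ref{lemma:version preprocessing}, handles the case when the successor lies strictly above the queried block on a given chain. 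The step I expect to be hardest is realising the $\mathcal{O}(1)$-per-level sibling aggregation: the binary proof exploits that there is a single right-sibling per level, whereas now I have $d-1$ siblings and only a constant machine-word budget per level. My plan is to precompute at every parent one aggregated bitmap across its $d$ children together with a small look-up table keyed by the unknown-mask, and to combine the two in constant time using the word-parallel successor primitives provided by Lemma~\ref{lemma:atomic}.
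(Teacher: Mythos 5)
Your high-level plan (thin fragments, Lemma~\ref{lemma:fast small grid} for intra-group queries, persistent trees with word-packed helpers, two symmetric structures) is the same as the paper's, but two details of the discretisation are wrong and both matter.

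First, you sort the weights \emph{globally} and cut that list into blocks of size $\log^{3}n$. The paper instead sorts the weights \emph{per fragment} and groups them there. This difference is not cosmetic: the entire point of thin fragments is that each fragment has at most $\log n$ leaves, so that inside a single fragment any antichain under the ancestor order has at most $\log n$ elements, and therefore each group of $\log^{3}n$ vertices has at most $\log n$ ancestor-minimal elements. These ancestor-minimal vertices are what the paper calls \emph{representatives}, and bounding their number to $n/\log^{2}n$ overall is what keeps the ``is there anything from this group above $v$?'' test cheap. A globally sorted block of $\log^{3}n$ weights can straddle many fragments and be a full antichain of size $\Theta(\log^{3}n)$; keeping only one vertex per block then loses information.

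Second, and relatedly, you take a single ``topmost vertex'' per block as its representative. Already in the path case of Lemma~\ref{lemma:version preprocessing} the paper distinguishes the \emph{important} vertex of a group (the one with the largest weight, used as the search key) from the \emph{representative} (the one highest on the path, used to decide whether a higher group intersects $\ancestors(v)$); in the fragment case the representative degenerates into the full antichain of ancestor-minimal vertices, up to $\log n$ of them per group, and you cannot replace that antichain by a single vertex: with one ``topmost'' vertex you can no longer certify that a group contributes no ancestor of $v$, because $v$ may hang off a different leaf of the fragment than the one your chosen vertex sits above. So the step where the successor of $x$ lies in a higher group than the one initially found is not handled correctly by your structure.

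On the positive side, your observation that the persistent search tree needs degree $\log^{\epsilon}n$ rather than $2$ in order to bring its depth down to $\mathcal{O}(\log n/\log\log n)$ is a genuine and necessary point, and so is your concern about aggregating the $d-1$ sibling helpers in $\mathcal{O}(1)$ amortised time per level. The paper glosses over both (it literally says ``binary search tree'' again), so you are right to flag them; your proposed fix of precomputed per-parent aggregated bitmaps plus a mask-indexed table is a plausible way to close that gap. To make the whole proof go through you should retain that idea but switch back to per-fragment sorting and to the full antichain of ancestor-minimal representatives with the two separate persistent structures, exactly as in the paper.
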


\begin{proof}
For each fragment containing of $\ell$ vertices we construct a sorted list of all weights. We choose $\frac{\ell}{\log^{3}n}$ evenly spaced weights out of them and call the corresponding vertices {\it important} (notice $\log^{3}n$ instead of $\log^{2}n$). For each such important vertex $v_{\alpha\log^{3}n}$ we construct a small set of all pairs
$$\left(-\pre(v_{\alpha\log^{3}n-\Delta}),\post(v_{\alpha\log^{3}n-\Delta}),\weight(v_{\alpha\log^{3}n})\right) \quad\text{ for } \Delta=0,1,\dots,\log^{3}n$$
and apply the preprocessing described in Lemma~\ref{lemma:fast small grid}. Additionally, from each group we select vertices which are minimal in the relation of being an ancestor and call them the {\it representatives}. Note that a single group cannot have more than $\log n$ such representatives, therefore the total number of representatives is just $\mathcal{O}(\frac{n}{\log^{2}n})$.

We will show how to preprocess the tree so that given $v$ and $x$, for each fragment above $v$ we can locate the successor of $x$ among the weights of all important vertices and the weights of all representatives which are above $v$.  We use exactly the same method as in Lemma~\ref{lemma:version preprocessing}, namely for each fragment $f$ we build a binary search tree containing all important vertices on all fragments corresponding to ancestors of $f$ in the fragment tree (including $f$), with a smaller structure stored at each vertex. Similarly, for each representative we build a binary search tree containing all representatives above. Now the size of the helper structure is just $\mathcal{O}(\frac{\log n}{\log\log n})$, though,  hence we can find the successor for each fragment in $\mathcal{O}(\frac{\log n}{\log\log n})$ time after just linear preprocessing. Given the successor, we use the helper structure storing the whole group of the corresponding vertex to retrieve the answer in constant time per fragment above $v$.
\qed
\end{proof}

This gives us all the ingredients necessary to speed up the method from the previous section.

\begin{theorem}
Bridge color problem can be solved in $\mathcal{O}(\frac{\log m}{\log\log m})$ time after a linear time and space preprocessing.
\end{theorem}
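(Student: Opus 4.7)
The plan is to reuse the outline of the theorem at the end of Section~\ref{section:algorithm}, but to replace every component by the faster machinery developed in the present section. As a first step I would apply the same endpoint-splitting transformation to reduce to a rectangle problem on an $n'\times n'$ grid with $n'\leq n+2m$. I would then sweep the grid left to right using the full $d$-ary tree of degree $d=\log^{\epsilon}m$ described at the start of the section, whose depth is $\mathcal{O}(\frac{\log m}{\log\log m})$. At each vertex I maintain the left, right and block stacks, each implemented through a version tree preprocessed by Lemma~\ref{lemma:fast version preprocessing}; for the block stack I additionally keep the one-word bitmap of nonempty $(i,j)$ sub-stacks, so that the minimal range containing a given child index $k$ can be located in constant time. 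The overall space stays linear because every inserted interval is split into at most three pieces and each individual stack sees only pushes and pops over the whole sweep.

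For a query $(x,y)$ I would first apply Lemma~\ref{lemma:fast cascading} to retrieve, in $\mathcal{O}(\frac{\log m}{\log\log m})$ total time, the tail (current version-tree vertex) of every stack at every ancestor of the leaf holding $x$. Next I would mimic the selection argument of Lemma~\ref{lemma:smallest}: for each stack type I precompute on the version tree the maximum height on every tail-to-root path, allowing me to scan the $\mathcal{O}(\frac{\log m}{\log\log m})$ ancestors bottom-up and identify the lowest one whose stack still holds a candidate large enough to cover $y$; by monotonicity of the $y$-segments along the tree, the shortest rectangle covering $(x,y)$ must reside in one of the constantly many stacks selected this way. A final successor query in each of these version trees, via Lemma~\ref{lemma:fast version preprocessing}, returns the actual minimum-height rectangle in $\mathcal{O}(\frac{\log m}{\log\log m})$ time. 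Running the whole procedure a second time with the roles of the $x$ and $y$ axes swapped yields the minimum-width rectangle; since the input no longer contains bridges that share an endpoint, the lowest bridge equals the common answer when the two rectangles coincide, and mismatch is reported as ambiguity.

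The step I expect to be the main obstacle is faithfully adapting the ``look at $v$ only'' argument of Lemma~\ref{lemma:smallest} to the three-way split of intervals: one has to argue separately for the prefix, suffix and middle pieces that any candidate stored at a strictly higher ancestor has strictly larger height, and check that the extra $(i,j)$ bucketing of the block stack does not disturb this minimality claim (in particular, that the ``smallest $j-i$ containing $k$'' criterion is compatible with the height ordering). Once this case analysis is in place, the rest is a mechanical composition of Lemma~\ref{lemma:fast cascading} with Lemma~\ref{lemma:fast version preprocessing} over a tree of depth $\mathcal{O}(\frac{\log m}{\log\log m})$, giving the claimed bound.
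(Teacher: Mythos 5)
Your reconstruction is correct, and it is precisely what the paper intends: the paper states this theorem without a written proof, relying on the reader to combine Lemma~\ref{lemma:fast cascading} and Lemma~\ref{lemma:fast version preprocessing} over the degree-$\log^{\epsilon}m$ tree exactly as you do, and then to rerun the minimum-height computation with axes swapped to detect ambiguity, mirroring the theorem at the end of Section~\ref{section:algorithm}. The one concern you flag does resolve. At a fixed ancestor of the leaf for $y$, every stack entry (left, right, or block) whose piece covers $y$ comes from a rectangle whose $y$-projection has that ancestor as its LCA; since the collection is valid, the argument of Lemma~\ref{lemma:smallest} (an entry stored at a strictly higher ancestor has a strictly larger $y$-projection, hence larger height) carries over unchanged, so it still suffices to descend to the lowest ancestor holding a large-enough candidate and query only its constantly many stacks. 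For the block stack the $(i,j)$ bucketing is also consistent with height: candidates covering $y$ at a fixed ancestor form a nested chain, and a smaller rectangle in this chain fully covers a subrange of the children, so the minimum-height candidate lives in the nonempty bucket with the smallest $j-i$ containing $k$, which is exactly the bucket the one-word bitmap returns.
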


\section{Lower bound}
\label{section:lower bound}

It turns out that $\mathcal{O}(\frac{\log m}{\log\log m})$ query time is optimal if we want to keep the size of the structure close to linear. This follows from the results of~\cite{Patrascu}, where a lower bound for the following 2D stabbing problem is shown: preprocess a given collection of $m$ 2D rectilinear rectangles so that we can quickly retrieve (any) rectangle containing a given point. It turns out that if we want to keep the size of the structure $\mathcal{O}(m\log^c m)$, the best possible query time is $\Omega(\frac{\log m}{\log\log m})$, even if we allow randomization. Furthermore, the lower bound is shown through a reduction from the reachability oracle problem, and hence the $x$ and $y$ projections of any two rectangles in the collection are either disjoint or contained in each other. A closer inspection of the proof shows that the queries can be assumed to be chosen so that there is at most one rectangle containing the point. It follows that we can encode the instance as a binary method dispatching problem, and the claimed lower bound follows.

\section{Conclusions}

We presented a time-and-space optimal solution for the binary method dispatching problem. Two questions remain:

\begin{enumerate}
\item is it possible to make the structure dynamic?
\item can we achieve linear preprocessing and logarithmic query in the pointer machine model? Note that the construction of Alstrup et al.~\cite{Alstrup} makes a heavy use of word RAM specific structures, such as the van Emde Boas trees.
\end{enumerate}

\bibliographystyle{abbrv}
\bibliography{biblio}

\begin{thebibliography}{10}

\bibitem{Alstrup}
S.~Alstrup, G.~S. Brodal, I.~L. G{\o}rtz, and T.~Rauhe.
\newblock Time and space efficient multi-method dispatching.
\newblock In {\em Proceedings of the 8th Scandinavian Workshop on Algorithm
  Theory}, SWAT '02, pages 20--29, London, UK, UK, 2002. Springer-Verlag.

\bibitem{Chazelle}
B.~Chazelle and L.~J. Guibas.
\newblock Fractional cascading: I. a data structuring technique.
\newblock {\em Algorithmica}, 1(2):133--162, 1986.

\bibitem{Eppstein}
D.~Eppstein and S.~Muthukrishnan.
\newblock Internet packet filter management and rectangle geometry.
\newblock In {\em Proceedings of the twelfth annual ACM-SIAM symposium on
  Discrete algorithms}, SODA '01, pages 827--835, Philadelphia, PA, USA, 2001.
  Society for Industrial and Applied Mathematics.

\bibitem{FerraginaDynamic}
P.~Ferragina and S.~Muthukrishnan.
\newblock Efficient dynamic method-lookup for object oriented languages
  (extended abstract).
\newblock In {\em Proceedings of the Fourth Annual European Symposium on
  Algorithms}, ESA '96, pages 107--120, London, UK, 1996. Springer-Verlag.

\bibitem{Ferragina}
P.~Ferragina, S.~Muthukrishnan, and M.~de~Berg.
\newblock Multi-method dispatching: a geometric approach with applications to
  string matching problems.
\newblock In {\em Proceedings of the thirty-first annual ACM symposium on
  Theory of computing}, STOC '99, pages 483--491, New York, NY, USA, 1999. ACM.

\bibitem{FredmanWillard}
M.~L. Fredman and D.~E. Willard.
\newblock Trans-dichotomous algorithms for minimum spanning trees and shortest
  paths.
\newblock {\em J. Comput. Syst. Sci.}, 48(3):533--551, 1994.

\bibitem{Muthukrishnan}
S.~Muthukrishnan and M.~M\"{u}ller.
\newblock Time and space efficient method-lookup for object-oriented programs.
\newblock In {\em Proceedings of the seventh annual ACM-SIAM symposium on
  Discrete algorithms}, SODA '96, pages 42--51, Philadelphia, PA, USA, 1996.
  Society for Industrial and Applied Mathematics.

\bibitem{Patrascu}
M.~Patrascu.
\newblock Unifying the landscape of cell-probe lower bounds.
\newblock {\em SIAM J. Comput.}, 40(3):827--847, 2011.

\bibitem{Poon}
C.~K. Poon and A.~Kwok.
\newblock Space optimal packet classification for 2-d conflict-free filters.
\newblock In {\em ISPAN}, pages 260--265, 2004.

\bibitem{Jaja}
Q.~Shi and J.~J{\'a}J{\'a}.
\newblock Novel transformation techniques using q-heaps with applications to
  computational geometry.
\newblock {\em SIAM J. Comput.}, 34(6):1474--1492, 2005.

\end{thebibliography}

\end{document}